\newtheorem{theorem}{Theorem}
\newtheorem{lemma}[theorem]{Lemma}
\newtheorem{proposition}[theorem]{Proposition}
\newtheorem{remark}[theorem]{Remark}
\newcommand{\rem}[1]{}
\numberwithin{equation}{section}
\numberwithin{figure}{section}
\begin{document}

\title{Wave breaking for the Stochastic Camassa-Holm equation\thanks{%
Paper submitted to the Physica D special issue \textit{Nonlinear Partial
Differential Equations in Mathematical Fluid Dynamics} dedicated to Prof.
Edriss S. Titi on the occasion of his 60th birthday. Work partially
supported by the EPSRC Standard Grant EP/N023781/1.}}
\author{Dan Crisan\thanks{%
Department of Mathematics, Imperial College, London SW7 2AZ, UK. Email:
d.crisan@ic.ac.uk} \ and Darryl D Holm\thanks{%
Department of Mathematics, Imperial College, London SW7 2AZ, UK. Email:
d.holm@ic.ac.uk}}
\date{ }
\maketitle

\begin{abstract}
We show that wave breaking occurs with positive probability for the
Stochastic Camassa-Holm (SCH) equation. This means that temporal
stochasticity in the diffeomorphic flow map for SCH does not prevent the
wave breaking process which leads to the formation of peakon solutions.
{We conjecture} that the time-asymptotic solutions of SCH {will} consist of emergent wave trains of peakons moving along stochastic space-time paths.
\end{abstract}



\makeatother



\section{The deterministic Camassa-Holm (CH) equation}

The deterministic CH equation, derived in \cite{CH1993}, is a nonlinear
shallow water wave equation for a fluid velocity solution whose profile $%
u(x,t)$ and its gradient both decay to zero at spatial infinity, ${%
|x|\to\infty}$, on the real line $\mathbb{R}$. Namely, 
\begin{equation}
u_{t}-u_{xxt}+3uu_{x}=2u_{x}u_{xx}+uu_{xxx}\,,  \label{CH-eqn1}
\end{equation}%
where subscripts $t$ (resp. $x$) denote partial derivatives in time (resp.
space). This nonlinear, nonlocal, completely integrable PDE may be written
in \emph{Hamiltonian form} for a momentum density $m:=u-u_{xx}$ undergoing
coadjoint motion, as \cite{CH1993} 
\begin{equation}
m_{t} = \{m,h(m)\} =-\,(\partial _{x}m+m\partial _{x})\frac{\delta h}{\delta
m}\,,  \label{CH-eqn1*}
\end{equation}%
which is generated by the Lie-Poisson bracket 
\begin{equation}
\{f,h\} (m) =-\,\int \frac{\delta f}{\delta m} \big(\partial _{x}m+m\partial
_{x})\frac{\delta h}{\delta m}\,dx  \label{CH-LPB}
\end{equation}
and Hamiltonian function 
\begin{equation}
h(m)=\frac{1}{2}\int_{\mathbb{R}}mK\ast m\,dx=\frac{1}{2}\int_{\mathbb{R}%
}u^{2}+u_{x}^{2}\,dx=\frac{1}{2}\Vert u\Vert _{H^{1}}^{2}=const.
\label{H1norm-1D}
\end{equation}%
Here, $K\ast m:=\int K(x,y)\,m(y,t)dy$ denotes convolution of the momentum
density $m$ with the Green's function of the Helmholtz operator $%
L=1-\partial _{x}^{2}$, so that 
\begin{equation}
\frac{\delta h}{\delta m}=K\ast m=u\quad \text{with}\quad K(x-y)=\frac{1}{2}%
\exp (-|x-y|)\,.  \label{Greens-relation}
\end{equation}%
Alternatively, the CH equation \eqref{CH-eqn1} may be written in advective
form as 
\begin{equation}
u_{t}+uu_{x} =-\,\partial _{x}\Big(K\ast \big(u^{2}+\frac{1}{2}u_{x}^{2}\big)%
\Big) =-\,\partial _{x}\int_{\mathbb{R}}\frac{1}{2}\exp (-|x-y|)\left(
u^{2}(y,t)+\frac{1}{2}u_{y}^{2}(y,t)\right) \,dy\,.  \label{CH-eqn3}
\end{equation}

The deterministic CH equation admits signature solutions representing a wave
train of peaked solitons, called \emph{peakons}, given by 
\begin{equation}  \label{peakontrain-soln}
u(x,t)=\frac12\sum_{a=1}^Mp_a(t)\mathrm{e}^{-|x-q_a(t)|} = K*m \,,
\end{equation}
which emerge from smooth confined initial conditions for the velocity
profile. Such a sum is an \emph{exact solution} of the CH equation (\ref%
{CH-eqn1}) provided the time-dependent parameters $\{p_a\}$ and $\{q_a\}$, $%
a=1,\dots,M$, satisfy certain canonical Hamiltonian equations, to be
discussed later. In fact, the peakon velocity wave train in %
\eqref{peakontrain-soln} is the \emph{asymptotic solution} of the CH
equation for any spatially confined $C^1$ initial condition, $u(x,0)$.

\begin{figure}[h!]
\begin{center}
\includegraphics[width=.9\textwidth,angle=0]{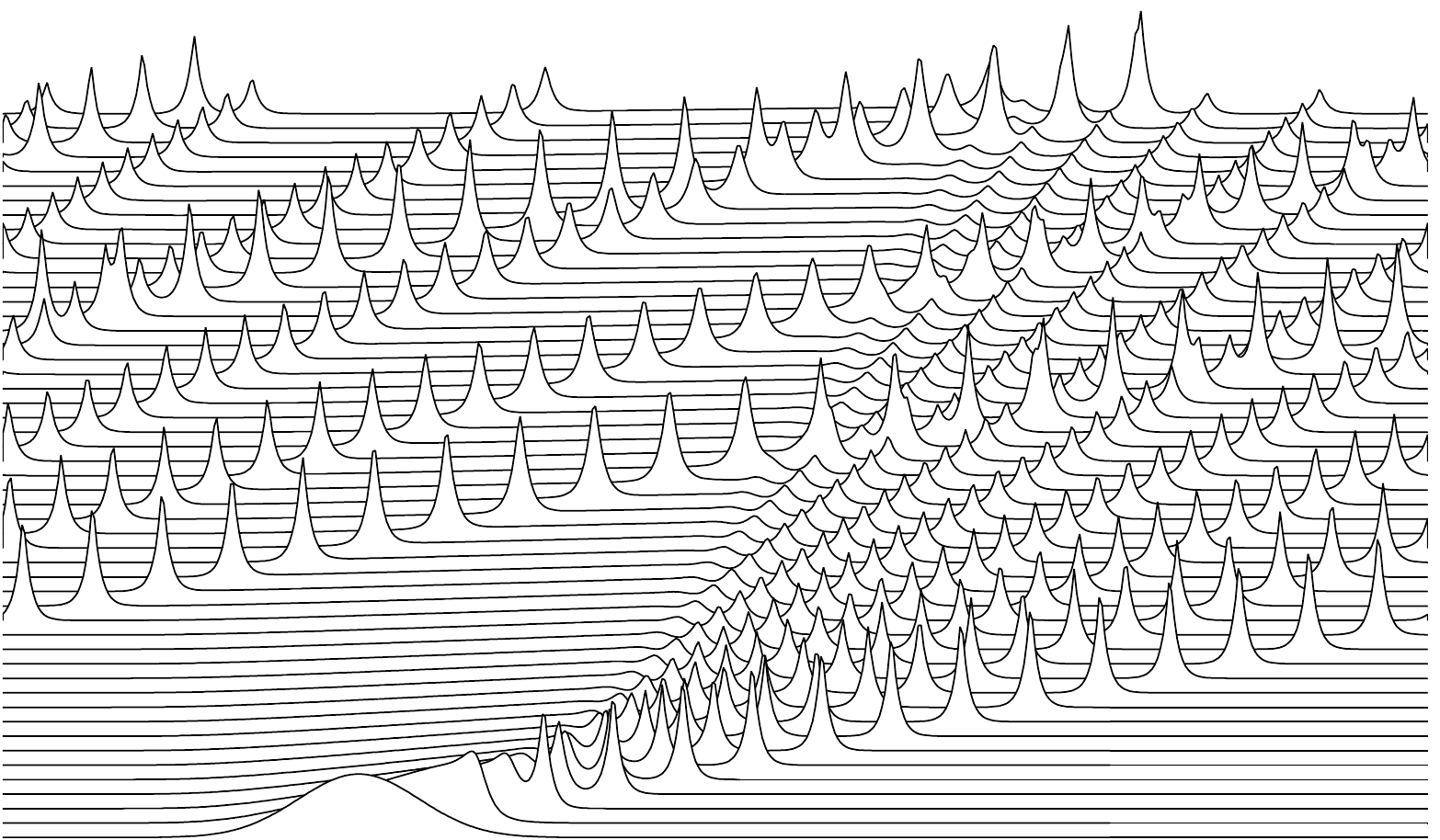}
\end{center}
\caption{{\protect\footnotesize Under the evolution of the CH equation (%
\protect\ref{CH-eqn1}), an ordered \emph{wave train of peakons} emerges from
a smooth localized initial condition (a Gaussian). The speeds are
proportional to the heights of the peaks. The spatial profiles of the
velocity at successive times are offset in the vertical to show the
evolution. The peakon wave train eventually wraps around the periodic
domain, thereby allowing the faster peakons which emerge earlier to overtake
slower peakons emerging later from behind in collisions that conserve
momentum and preserve the peakon shape but cause phase shifts in the
positions of the peaks, as discussed in \protect\cite{CH1993}.}}
\label{peakon_figure}
\end{figure}


\begin{remark}
The peakon-train solutions of CH represent an \emph{emergent phenomenon}. A
wave train of peakons emerges in solving the initial-value problem for the
CH equation (\ref{CH-eqn1}) for essentially any spatially confined initial
condition. An example of the emergence of a wave train of peakons from a
Gaussian initial condition is shown in Figure \ref{peakon_figure}. 
\index{emergent phenomenon! peakon wave train}
\end{remark}

\begin{remark}
By equation (\ref{Greens-relation}), the momentum density corresponding to
the peakon wave train \eqref{peakontrain-soln} in velocity is given by a sum
over delta functions in momentum density, representing the \emph{singular
solution}, 
\begin{equation}  \label{m-delta-N}
m(x,t) = \sum_{a=1}^M p_a(t)\,\delta(x-q_a(t)) \,,
\end{equation}
in which the \emph{delta function} $\delta(x-q)$ is defined by 
\begin{equation}
f(q) = \int f(x)\delta(x-q)\,\mathrm{d}x \,,  \label{delta-def}
\end{equation}
for an arbitrary smooth function $f$. Physically, the relationship %
\eqref{m-delta-N} represents the dynamical coalescence of the CH momentum
density into particle-like coherent structures (Young measures) which
undergo elastic collisions as a result of their nonlinear interactions.
Mathematically, the singular solutions of CH are captured by recognizing
that the singular solution ansatz \eqref{m-delta-N} itself is an equivariant
momentum map from the canonical phase space of $M$ points embedded on the
real line, to the dual of the vector fields on the real line. Namely, 
\begin{equation}  \label{momap}
m: T ^{\ast} 
{\rm Emb}(\mathbb{Z}, \mathbb{R}) \rightarrow \mathfrak{X}(\mathbb{R}%
)^{\ast}.
\end{equation}
This momentum map property explains, for example, why the singular solutions %
\eqref{m-delta-N} form an invariant manifold for any value of $M$ and why
their dynamics form a canonical Hamiltonian system, \cite{HoMa2005}.
\end{remark}

The complete integrability of the CH equation as a Hamiltonian system
follows from its isospectral problem.

\begin{theorem}[Isospectral problem for CH \protect\cite{CH1993}]
The CH equation in \eqref{CH-eqn1} follows from the compatibility conditions
for the following CH isospectral eigenvalue problem and evolution equation
for the real eigenfunction $\psi(x,t)$, 
\begin{align}
\psi_{xx} &= \left(\frac14 - \frac{m}{2\lambda} \right)\psi \,,
\label{CHevprob} \\
\partial_t\psi &= -(\lambda + u) \psi_x + \frac12 u_x\psi \,,
\label{CHevoleqn}
\end{align}
with real isospectral parameter, $\lambda$.
\end{theorem}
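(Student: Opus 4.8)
The plan is to treat \eqref{CHevprob}--\eqref{CHevoleqn} as a Lax pair and to derive the CH equation purely from the requirement that the two equations be compatible, i.e. that $\psi_{xxt}=\psi_{txx}$, under the isospectrality assumption $\partial_t\lambda=0$. Writing the spectral problem as $\psi_{xx}=Q\psi$ with $Q=\tfrac14-\tfrac{m}{2\lambda}$, and the evolution equation as $\psi_t=A\psi_x+B\psi$ with $A=-(\lambda+u)$ and $B=\tfrac12 u_x$, the strategy is to compute both mixed partial derivatives, use the spectral relation $\psi_{xx}=Q\psi$ repeatedly to eliminate all second and higher $x$-derivatives of $\psi$, and thereby express each side as a combination of the two independent quantities $\psi$ and $\psi_x$.

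First I would differentiate $\psi_{xx}=Q\psi$ in $t$ to obtain $\psi_{xxt}=(Q_t+QB)\psi+QA\psi_x$. Separately I would differentiate $\psi_t=A\psi_x+B\psi$ twice in $x$, substituting $\psi_{xx}=Q\psi$ at each stage, to obtain $\psi_{txx}$ as another explicit combination of $\psi$ and $\psi_x$. Since the spectral problem is a second-order linear ODE in $x$, its solution space is two-dimensional and the values of $\psi$ and $\psi_x$ at a point are independent; hence a linear relation $P\psi+R\psi_x=0$ valid along all solutions forces $P=R=0$. Thus compatibility is equivalent to matching the coefficients of $\psi$ and of $\psi_x$ separately.

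Matching the $\psi_x$ coefficients yields $A_{xx}+2B_x=0$, which with $A=-(\lambda+u)$, $B=\tfrac12 u_x$ reduces to $-u_{xx}+u_{xx}=0$ — an identity. This is precisely the consistency check that fixes the form of $B$ in \eqref{CHevoleqn}, and I expect it to be automatic. The substantive condition comes from matching the $\psi$ coefficients, which I anticipate collapses to $Q_t=2A_xQ+AQ_x+B_{xx}$. Substituting the explicit $A$, $B$, $Q$ gives $-\tfrac{m_t}{2\lambda}$ on the left and, on the right, a sum of $\lambda$-independent terms together with $\tfrac1\lambda$-terms.

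The crux of the calculation — and the step I expect to require the most care — is the bookkeeping of the $\lambda$-dependence. Using $m=u-u_{xx}$, hence $m_x=u_x-u_{xxx}$, the $\lambda$-independent terms on the right should combine to $-\tfrac{u_x}{2}+\tfrac{m_x}{2}+\tfrac12 u_{xxx}=0$ and cancel identically, while the $\tfrac1\lambda$-terms yield $m_t=-(u\,m_x+2u_x m)=-(\partial_x m+m\partial_x)u$. This is exactly the momentum form \eqref{CH-eqn1*} of the Camassa--Holm equation, equivalently \eqref{CH-eqn1}, completing the derivation. The only genuine subtlety is ensuring that the isospectral condition $\partial_t\lambda=0$ is invoked consistently, so that $\lambda$ is held constant throughout the $t$-differentiation and the separation of orders in $\lambda$ is legitimate.
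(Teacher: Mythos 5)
Your proposal is correct and follows essentially the same route as the paper, which simply states that equating the cross derivatives $\partial_t\psi_{xx}=\partial_x^2\partial_t\psi$ via \eqref{CHevprob} and \eqref{CHevoleqn} with $d\lambda/dt=0$ yields \eqref{CH-eqn1}; your coefficient-matching in $\psi$, $\psi_x$ and in powers of $\lambda^{-1}$ is just that direct calculation carried out explicitly, and the algebra checks out (the $\psi_x$ condition $A_{xx}+2B_x=0$ is an identity and the $\lambda^{-1}$ terms give $m_t=-(um_x+2u_xm)$). No gaps to report.
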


\begin{proof}
By direct calculation, equating cross derivatives $\partial_t\psi_{xx} = \partial_x^2\partial_t\psi$ using equations \eqref{CHevprob} and \eqref{CHevoleqn} implies the CH equation in \eqref{CH-eqn1}, provided $d\lambda/dt=0$. 
\end{proof}

\begin{remark}
The complete integrability of the CH equation as a Hamiltonian system and
its soliton paradigm explain the emergence of peakons in the CH dynamics.
Namely, their emergence reveals the initial condition's soliton (peakon)
content.
\end{remark}



\subsection{Steepening Lemma: the mechanism for peakon formation}

\label{steepen-lemma}

In the following we will continue working on the entire real line $\mathbb{R}
$, although similar results are also available for a periodic domain with
only minimal effort. We use the notation $\|u\|_2$, $\|u\|_{1,2}$ and $%
\|u\|_\infty$ to denote, respectively, 
\begin{equation*}
\|u\|_2:=\int_{-\infty}^\infty \!\!\left(u^2\right)\mathrm{d} y \,,\quad
\|u\|_{1,2}:=\int_{-\infty}^\infty \!\!\left(u^2+\frac{1}{2}u_y^2\right)%
\mathrm{d} y \,,\quad\hbox{and}\quad \|u\|_\infty := \sup_{x\in \mathbb{R}%
}\|u(x)\| \,.
\end{equation*}

\begin{remark}[Local well-posedness of CH]
As reviewed in \cite{HoMa2005}, the deterministic CH equation \eqref{CH-eqn1}
is locally well posed on $\mathbb{R}$, for initial conditions in $H^{s}$
with $s>3/2$. In particular, with such initial data, CH solutions are $%
C^{\infty }$ in time and the Hamiltonian $h(m)$ in \eqref{H1norm-1D} is
bounded for all time, 
\begin{equation*}
h:=\Vert u(\cdot ,t)\Vert _{1,2}<\infty \,.
\end{equation*}%
In fact, CH solutions preserve the Hamiltonian in \eqref{H1norm-1D} given by
the $\Vert u(\cdot ,t)\Vert _{1,2}$ norm 
\begin{equation}
\Vert u(\cdot ,t)\Vert _{1,2}=h=constant,\ \ \ \mathrm{for\ \ all}\ \ x\in 
\mathbb{R}.  \label{constantnorm}
\end{equation}%
By a standard Sobolev embedding theorem, \eqref{constantnorm} also implies
the useful relation that 
\begin{equation}
M:=\sup_{t\in \lbrack 0,\infty )}\Vert u(\cdot ,t)\Vert _{\infty }<\infty .
\label{M}
\end{equation}
\end{remark}

The mechanism for the emergent formation of the peakons seen in Figure \ref%
{peakon_figure} may also be understood as a variant of classical formations
of weak solutions in fluid dynamics by showing that initial conditions exist
for which the solution of the CH equation (\ref{CH-eqn1}) can develop a
vertical slope in its velocity $u(t,x)$, in finite time. The mechanism turns
out to be associated with \emph{inflection points of negative slope}, such
as occur on the leading edge of a rightward-propagating, spatially-confined
velocity profile. In particular, 

\begin{lemma}[Steepening Lemma \protect\cite{CH1993}]
$\quad$\newline
Suppose the initial profile of velocity $u(x,0)$ has an inflection point at $%
x=\overline{x}$ to the right of its maximum, and otherwise it decays to zero
in each direction and that $\|u(\cdot,0)\|_{1,2}<\infty$. Moreover we assume
that $u_x(\bar x,0)<-\sqrt{2M},$ where $M$ is the constant defined in %
\eqref{M}. Then, the negative slope at the inflection point will become
vertical in finite time. 
\index{steepening lemma! CH equation in 1D}
\end{lemma}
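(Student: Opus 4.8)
The plan is to reduce the statement to a scalar Riccati differential inequality for the slope of $u$ measured along a fluid characteristic, and then to close by a comparison argument. First I would work from the advective form \eqref{CH-eqn3}, $u_t + uu_x = -P_x$, where $P := K\ast(u^2 + \tfrac12 u_x^2)$, and record two elementary facts. Since $K=\tfrac12 e^{-|x|}>0$ and the convolved quantity is a sum of squares, $P\ge 0$ pointwise. And since $K$ is the Green's function of $1-\partial_x^2$ from \eqref{Greens-relation}, $P$ obeys the Helmholtz identity $P - P_{xx} = u^2 + \tfrac12 u_x^2$, i.e.\ $P_{xx} = P - u^2 - \tfrac12 u_x^2$. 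Both facts are used at once in the next step.

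Next I would differentiate the advective equation once in $x$ and eliminate $P_{xx}$ with the Helmholtz identity, which yields
\[
u_{xt} + uu_{xx} = u^2 - P - \tfrac12 u_x^2 .
\]
The left-hand side is exactly the material derivative of $u_x$ along the flow $\dot\xi = u(\xi,t)$, which is smooth up to the breaking time by the local well-posedness recorded above. Launching the characteristic from the inflection point, $\xi(0)=\bar x$, and setting $s(t):=u_x(\xi(t),t)$, the identity becomes the differential inequality
\[
\dot s = -\tfrac12 s^2 + \big(u^2 - P\big) \ \le\ -\tfrac12 s^2 + M ,
\]
where the last step uses $P\ge 0$ together with the uniform bound \eqref{M}; the constant is calibrated so that the comparison equation $\dot y = -\tfrac12 y^2 + M$ has equilibria at $y=\pm\sqrt{2M}$.

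The conclusion is then a phase-line comparison. The hypothesis $u_x(\bar x,0) < -\sqrt{2M}$ places $s(0)$ strictly below the lower equilibrium, so that the right-hand side $-\tfrac12 s^2 + M$ is negative there. Comparing $s$ with the solution $y$ of $\dot y = -\tfrac12 y^2 + M$, $y(0)=s(0)$, gives $s(t)\le y(t)$ on their common interval of existence, while $y$ decreases monotonically and reaches $-\infty$ in a finite time $T^\ast$ (the separable equation integrates explicitly, and any solution starting below $-\sqrt{2M}$ decreases to $-\infty$ in finite time). Hence $s(t)=u_x(\xi(t),t)\to-\infty$ at or before $T^\ast$, i.e.\ the negative slope becomes vertical in finite time, as claimed.

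The step I expect to be the main obstacle is pinning down the forcing bound $u^2 - P \le M$ with the sharp constant that makes the threshold $-\sqrt{2M}$ exact: $P\ge 0$ is immediate, and control of $u^2$ comes from the conserved norm $\|u(\cdot,t)\|_{1,2}=h$ (finite because $\|u(\cdot,0)\|_{1,2}<\infty$) via the Sobolev embedding \eqref{M}, but one must verify it along the moving point $\xi(t)$ rather than only at $\bar x$, and check that the constant is not degraded. The geometric hypotheses — an inflection point to the right of the maximum, with decay in both directions — play the supporting role of guaranteeing that $\xi$ follows the leading edge of a rightward-moving profile, where the slope is genuinely negative and steepening, so that $s(0)<0$ is consistent and the tracked point stays in the steepening region; the quantitative engine of the argument, however, is entirely contained in the single scalar inequality above.
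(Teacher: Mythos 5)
Your argument is correct, and its quantitative engine is identical to the paper's: differentiate the advective form \eqref{CH-eqn3}, use the Helmholtz identity to eliminate $P_{xx}$, drop the nonnegative term $P=K\ast(u^2+\tfrac12 u_x^2)\ge 0$, and arrive at the Riccati inequality $\dot s \le -\tfrac12 s^2 + M$, which is closed by comparison with the explicit $\coth$ solution that blows down in finite time once $s(0)<-\sqrt{2M}$. The one structural difference is the moving point along which the slope is tracked. The paper follows the inflection point $\overline{x}(t)$ itself: there $u_{xx}(\overline{x}(t),t)=0$, which simultaneously kills the $uu_{xx}$ term in the equation for $\partial_t u_x$ and makes $\frac{d}{dt}u_x(\overline{x}(t),t)=(\partial_t u_x)(\overline{x}(t),t)$; the price is an implicit assumption that the inflection point persists and moves differentiably (an issue the paper only confronts explicitly, via the implicit function theorem, in the stochastic section). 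You instead follow the characteristic $\dot\xi=u(\xi,t)$ launched from $\bar x$ and absorb $uu_{xx}$ into the material derivative; since the identity $u_{xt}+uu_{xx}=u^2-\tfrac12 u_x^2-P$ holds at \emph{every} point, no persistence of the inflection point is required, and the characteristic exists as long as the solution is smooth. This is the more robust route (it is the standard Constantin--Escher-type blow-up argument), at the mild cost that the conclusion is blow-up of $u_x$ along a characteristic --- hence $\inf_x u_x(\cdot,t)\to-\infty$ --- rather than literally ``at the inflection point'' as the lemma is phrased. The obstacle you flag is not really one: $u^2(\xi(t),t)-P(\xi(t),t)\le \sup_t\|u(\cdot,t)\|_\infty^2$ holds uniformly in space and time by \eqref{M} together with $P\ge 0$, so nothing extra is needed along the moving point; the only residual issue is the bookkeeping of $M$ versus $M^2$ in the forcing bound, which the paper's own proof elides in exactly the same way.
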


\begin{proof}
Consider the evolution of the slope at the inflection point $t\mapsto 
\overline{x}(t)$ that starts at time $0$ from an inflection point $x=%
\overline{x}$ of $u(x,0)$ to the right of its maximum so that 
\begin{equation*}
s_{0}:=u_{x}(\overline{x}(0),0)<\infty .
\end{equation*}%
Define $s_{t}:=u_{x}(\overline{x}(t),t),\ \ t\geq 0$. From the spatial
derivative of the advective form of the CH equation \eqref{CH-eqn3} one
obtains 
\begin{equation*}
\partial _{x}(u_{t}+uu_{x})=-\partial _{x}^{2}K\ast (u^{2}+\frac{1}{2}%
u_{x}^{2})=u^{2}+\frac{1}{2}u_{x}^{2}-K\ast (u^{2}+\frac{1}{2}u_{x}^{2}),
\end{equation*}%
which leads to 
\begin{equation*}
\partial _{t}u_{x}=-uu_{xx}+u^{2}-\frac{1}{2}u_{x}^{2}-K\ast (u^{2}+\frac{1}{%
2}u_{x}^{2}).
\end{equation*}%
This, in turn, yields an equation for the evolution of $t\mapsto s_{t}$.
Namely, by using $u_{xx}(\overline{x}(t),t)=0$ and \eqref{M} one finds 
\begin{eqnarray}
\frac{\mathrm{d}s}{\mathrm{d}t} &=&-\,\frac{1}{2}s^{2}+u^{2}(\overline{x}%
(t),t)-\frac{1}{2}\int_{-\infty }^{\infty }\,\mathrm{e}^{-|\overline{x}%
(t)-y|}\left( u^{2}+\frac{1}{2}u_{y}^{2}\right) \mathrm{d}y  \notag
\label{slope-eqn2} \\
&\leq &-\,\frac{1}{2}s^{2}+M\,.
\end{eqnarray}%
Let $\tilde{s}$ be the solution of the equation 
\begin{equation}
\frac{\mathrm{d}\tilde{s}}{\mathrm{d}t}=-\,\frac{1}{2}\tilde{s}^{2}+M,\ \ 
\tilde{s}_{0}=s_{0}\,.  \label{coth}
\end{equation}%
Observe that 
\begin{equation*}
\frac{\mathrm{d}}{\mathrm{d}t}((s_{t}-\tilde{s}_{t})\mathrm{e}^{{\frac{1}{2}}%
\int_{0}^{t}(s_{p}+\tilde{s}_{p})dp})\leq 0,\ \ s_{0}-\tilde{s}_{0}=0,
\end{equation*}%
therefore, $s_{t}\leq \tilde{s}_{t}$ for all $t>0$ (as long as both are well
defined). However, equation \eqref{coth} admits the explicit solution 
\begin{equation*}
\bar{s}=\sqrt{2M}\coth \left( \sigma +\frac{t}{2}\sqrt{2M}\right)
,\,\,\,\sigma =\coth ^{-1}\left( {\frac{s_{0}}{\sqrt{2M}}}\right) <0\,.
\end{equation*}%
Since $\lim_{t\mapsto -2\sigma /\sqrt{2M}}\bar{s}_{t}=-\infty $ it follows
that there exists a time $\tau \leq -2\sigma /\sqrt{2M}$ by which the slope $%
s_{t}=u_{x}(\overline{x}(t),t)$ becomes negative and vertical, i.e. $%
\lim_{t\mapsto \tau }\bar{s}_{t}=-\infty $. 
\end{proof}

\begin{remark}
Suppose the initial condition is anti-symmetric, so the inflection point at $%
u=0$ is \emph{fixed} and $\mathrm{d}\overline{x}/\mathrm{d}t=0$, due to the
symmetry $(u,x)\to(-u,-x)$ admitted by equation (\ref{CH-eqn1}). In this
case, the total momentum vanishes, i.e. $M=0$, and no matter how small $%
|s(0)|$ (with $s(0)<0$), the verticality $s\to-\infty$ develops at $%
\overline{x}$ in finite time.\smallskip
\end{remark}


\begin{remark}
The Steepening Lemma of \cite{CH1993} proves that in one dimension any
initial velocity distribution whose spatial profile has an inflection point
with negative slope (for example, any antisymmetric smooth initial
distribution of velocity on the real line) will develop a vertical slope in
finite time. Note that the peakon solution (\ref{peakontrain-soln}) has no
inflection points, so it is not subject to the steepening lemma.

The Steepening Lemma underlies the mechanism for forming these singular
solutions, which are continuous but have discontinuous spatial derivatives.
Indeed, the numerical simulations in Figure \ref{peakon_figure} show that
the presence of an inflection point of negative slope in any confined
initial velocity distribution triggers the steepening lemma as the \emph{%
mechanism} for the formation of the peakons. Namely, according to Figure \ref%
{peakon_figure}, the initial (positive) velocity profile ``leans'' to the
right and steepens, then produces a peakon that is taller than the initial
profile, so it propagates away to the right, since the peakon moves at a
speed equal to its height. This process leaves a profile behind with an
inflection point of negative slope; so it repeats, thereby producing a wave
train of peakons with the tallest and fastest ones moving rightward in order
of height. In fact, Figure \ref{peakon_figure} shows that this recurrent
process produces only peakon solutions, as in \eqref{peakontrain-soln}. This
is a result of the isospectral property of CH as a completely integrable
Hamiltonian system, \cite{CH1993}. Namely, the eigenvalues of the initial
profile $u(x,0)$ for the associated CH isospectral problem are equal to the
asymptotic speeds of the peakons in the wave train \eqref{peakontrain-soln}.

The peakon solutions lie in $H ^1$ and have finite energy. We conclude that
solutions with initial conditions in $H^s$ with $s > 3/2$ go to infinity in
the $H^s$ norm in finite time, but they remain in $H^1$ and presumably
continue to exist in a weak sense for all time in $H^1$.
\end{remark}

\color{black}


\section{Advective form of the Stochastic Camassa-Holm (SCH) equation}

Following \cite{Holm2015}, we derive the SCH equation by introducing the
stochastic Hamiltonian function, 
\begin{equation}
\widetilde{h}(m)=\frac{1}{2}\int_{\mathbb{R}}m(x,t)K\ast
m(x,t)\,dx\,dt+\int_{\mathbb{R}}m(x,t)\sum_{i=1}^{N}\xi ^{i}(x)\circ
dW_{t}^{i}\,dx\,.  \label{stoch-Ham}
\end{equation}%
The second term generates spatially correlated random displacements, by
pairing the momentum density with the Stratonovich noise in \eqref{stoch-Ham}
via a set of time-independent prescribed functions $\xi ^{i}(x)$, $%
i=1,2,\dots ,N$, representing the spatial correlations. Thus, the resulting
SCH equation is given by 
\begin{equation}
0=\mathsf{d}m+(\partial _{x}m+m\partial _{x})\frac{\delta \widetilde{h}(m)}{%
\delta m}=\mathsf{d}m+(\partial _{x}m+m\partial _{x}){v}\,,  \label{SCH-eqn}
\end{equation}%
where $m:=u-u_{xx}$ and the stochastic vector field $v$, defined by 
\begin{equation}
v(x,t):=u(x,t)\,dt+\sum_{i=1}^{N}\xi ^{i}(x)\circ dW_{t}^{i}\,,
\label{stochVF}
\end{equation}
represents random spatially correlated shifts in the velocity, \cite%
{Holm2015,HoMaRa1998,HoTyr2016,HoTyr2017}. Thus, the noise introduced in %
\eqref{SCH-eqn} and \eqref{stochVF} represents an additional stochastic
perturbation in the momentum transport velocity.


\subsection{Peakon solutions and isospectrality for the SCH equation}

\begin{theorem}
The SCH equation \eqref{SCH-eqn} with the stochastic vector field $v$ in %
\eqref{stochVF} admits the singular momentum solution for CH in %
\eqref{delta-def} for peakon wave trains.
\end{theorem}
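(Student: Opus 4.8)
The plan is to verify directly that the singular momentum ansatz \eqref{m-delta-N} forms an invariant manifold for the SCH flow \eqref{SCH-eqn}, thereby collapsing the stochastic PDE onto a finite-dimensional system of stochastic equations for the peakon parameters $\{q_a(t)\}$ and $\{p_a(t)\}$. First I would rewrite \eqref{SCH-eqn} in explicit transport (coadjoint) form,
\begin{equation*}
\mathsf{d}m = -(\partial_x m + m\partial_x)v = -(v\,m_x + 2m\,v_x)\,,
\end{equation*}
with $v$ the stochastic vector field \eqref{stochVF}, and then insert $m = \sum_{a=1}^M p_a(t)\,\delta(x-q_a(t))$ on both sides.

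For the left-hand side, taking the (Stratonovich) stochastic differential of the ansatz gives $\mathsf{d}m = \sum_a\big(\mathsf{d}p_a\,\delta(x-q_a) - p_a\,\mathsf{d}q_a\,\delta'(x-q_a)\big)$. For the right-hand side I would use the distributional identities $f(x)\delta(x-q)=f(q)\delta(x-q)$ and $f(x)\delta'(x-q)=f(q)\delta'(x-q)-f'(q)\delta(x-q)$ to collapse the two terms into
\begin{equation*}
v\,m_x + 2m\,v_x = \sum_{a=1}^M p_a\big(v(q_a)\,\delta'(x-q_a) + v_x(q_a)\,\delta(x-q_a)\big)\,.
\end{equation*}
Matching the coefficients of $\delta'(x-q_a)$ and of $\delta(x-q_a)$ separately then closes the system as
\begin{equation*}
\mathsf{d}q_a = v(q_a,t)\,, \qquad \mathsf{d}p_a = -\,p_a\,v_x(q_a,t)\,,
\end{equation*}
which, on unpacking $v$ from \eqref{stochVF}, are exactly the canonical stochastic Hamiltonian equations obtained by pulling $\widetilde h$ back along the momentum map \eqref{momap} to $T^\ast\Emb(\mathbb{Z},\mathbb{R})$, with $u(q_a)=\tfrac12\sum_b p_b\,\mathrm{e}^{-|q_a-q_b|}$.

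The step I expect to be the main obstacle is making sense of the evaluations $v(q_a,t)$ and especially $v_x(q_a,t)$ at the peakon supports, since $u=K\ast m$ and the correlation sum $\sum_i\xi^i\circ dW_t^i$ are merely Lipschitz there, so $v_x$ genuinely jumps across $x=q_a$. This is the same self-interaction ambiguity already present in the deterministic peakon reduction, and I would resolve it by invoking the equivariant momentum-map property \eqref{momap} recalled above: since \eqref{m-delta-N} is a Poisson map, the singular-solution manifold is a priori invariant under \emph{any} Hamiltonian vector field on $\mathfrak{X}(\mathbb{R})^\ast$, independently of the explicit Hamiltonian, which forces the symmetric prescription $\mathrm{sgn}(0)=0$ so that the singular self-term is absent and $u_x(q_a)=-\tfrac12\sum_b p_b\,\mathrm{sgn}(q_a-q_b)\,\mathrm{e}^{-|q_a-q_b|}$ is well defined. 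I would finally remark that, because the noise enters through a Hamiltonian linear in $m$ and the calculus is Stratonovich (so the ordinary chain and product rules hold), the entire reduction proceeds verbatim as in the deterministic case; the only new feature is that $\mathsf{d}q_a$ and $\mathsf{d}p_a$ acquire the additional Stratonovich increments $\sum_i\xi^i(q_a)\circ dW_t^i$ and $-p_a\sum_i\xi^i_x(q_a)\circ dW_t^i$, respectively.
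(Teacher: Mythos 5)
Your proof is correct, and it runs the computation in the opposite direction from the paper's. The paper substitutes the singular ansatz \eqref{m-delta-N} into the stochastic Hamiltonian \eqref{stoch-Ham} to obtain the collective Hamiltonian $\widetilde h(q,p)$, reads off the canonical equations $\mathsf{d}q_a = \partial\widetilde h/\partial p_a$ and $\mathsf{d}p_a = -\,\partial\widetilde h/\partial q_a$, and then simply asserts that substituting these back into \eqref{m-delta-N} recovers \eqref{SCH-eqn}; the momentum-map property \eqref{momap} is what guarantees, without further computation, that canonical dynamics on $T^\ast{\rm Emb}(\mathbb{Z},\mathbb{R})$ pushes forward to the coadjoint (Lie--Poisson) dynamics on $\mathfrak{X}(\mathbb{R})^\ast$. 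You instead insert the ansatz directly into the coadjoint form of \eqref{SCH-eqn} and match the coefficients of $\delta(x-q_a)$ and $\delta'(x-q_a)$, arriving at $\mathsf{d}q_a = v(q_a)$ and $\mathsf{d}p_a = -\,p_a v_x(q_a)$, which you then identify with the canonical equations; your distributional calculation is exactly the verification that the paper's closing sentence leaves implicit. Your route is therefore more self-contained --- it establishes invariance of the singular-solution manifold without presupposing equivariance of the momentum map to close the system --- at the cost of having to confront the self-interaction ambiguity in $u_x(q_a)$ head-on, whereas the paper's route obtains the symmetric convention $\mathrm{sgn}(0)=0$ for free (from $\partial_{q_a}|q_a-q_a|=0$ in the collective Hamiltonian) and exhibits the canonical Hamiltonian structure immediately. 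Two minor points: the loss of smoothness at $x=q_a$ concerns only $u=K\ast m$, not the noise part of $v$, since the $\xi^i$ are prescribed smooth functions; and your remark that the Stratonovich calculus lets the deterministic reduction go through verbatim is precisely the point on which the paper's argument rests.
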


\begin{proof}
Substituting the singular momentum relation \eqref{delta-def} into the stochastic Hamiltonian $\widetilde{h}(m)$ in \eqref{stoch-Ham} and performing the integrals yields
the Hamiltonian for the stochastic peakon trajectories as
\begin{align*}
\widetilde{h}(q,p)
&:=
\frac14\sum_{a,b=1}^M p_a(t)p_b(t) e^{-|q_a(t) - q_b(t)|}
+
\sum_{a,b=1}^M p_a(t)\sum_{i=1}^N \xi ^{i}(q_a(t))\circ dW_{t}^{i}
\,.
\end{align*}%
The canonical Hamiltonian equations for the stochastic peakon trajectories and momenta are thus given by 
\begin{align*}
\mathsf{d}q_a &= \frac{\partial\widetilde{h}}{\partial p_a}
=
\frac12\sum_{b=1}^M p_b(t) e^{-|q_a(t) - q_b(t)|}\,dt
+
\sum_{i=1}^N \xi ^{i}(q_a(t))\circ dW_{t}^{i}
\\&=
u(q_a(t))\,dt
+
\sum_{i=1}^N \xi ^{i}(q_a(t))\circ dW_{t}^{i}
=
v(q_a(t))
\,,
\end{align*}%
and
\begin{align*}
\mathsf{d}p_a = -\,\frac{\partial\widetilde{h}}{\partial q_a}
=
- \,p_a(t) \frac{\partial u}{\partial q_a}\,dt
- p_a(t)
\sum_{i=1}^N \frac{\partial \xi ^{i}}{\partial q_a}\circ dW_{t}^{i}
=
- \,p_a(t) \frac{\partial v(q_a(t))}{\partial q_a}
\,. 
\end{align*}%
Substituting these stochastic canonical Hamiltonian equations for $q_a(t)$ and $p_a(t)$ into the singular momentum solution for CH in \eqref{delta-def} recovers the SCH equation \eqref{SCH-eqn} and the stochastic vector field $v$ in \eqref{stochVF}.
\end{proof}

Thus, the SCH equation \eqref{SCH-eqn} admits peakon wave train solutions
whose peaks in velocity follow the stochastic trajectories given by the
stochastic vector field $v$ in \eqref{stochVF} and satisfy stochastic
canonical Hamiltonian equations. The corresponding canonical Hamiltonian
equations in the absence of noise describe the trajectories and momenta of
CH wave trains. For numerical studies of the interactions of stochastic
peakon solutions, see \cite{HoTyr2016,HoTyr2017}.

Remarkably, a certain amount of the isospectral structure for the
deterministic CH equation is preserved by the addition of the stochastic
transport perturbation we have introduced in \eqref{SCH-eqn} and %
\eqref{stochVF}.

\begin{theorem}[Isospectral problem for SCH]
\label{SCH-int-thm} The SCH equation in \eqref{SCH-eqn} follows from the
compatibility condition for the deterministic CH isospectral eigenvalue
problem \eqref{CHevprob}, and a stochastic evolution equation for the real
eigenfunction $\psi$, 
\begin{align}
\psi_{xx} &= \left(\frac14 - \frac{m}{2\lambda} \right)\psi \,,
\label{SCHevprob} \\
\mathsf{d}\psi &= -(\lambda + v) \psi_x + \frac12 v_x\psi \,,
\label{SCHevoleqn} \\
\hbox{with}\quad v:&= u\,dt + \sum_{i=1}^{N}\xi ^{i}(x) \circ dW^i_t \,,
\label{stochVFthm}
\end{align}
and real isospectral parameter, $\lambda$, provided $\mathsf{d}\lambda=0$
and $\xi^{i}(x) = C^{i}+A^{i}e^x+B^{i}e^{-x}$, for constants $A^{i},B^{i}$
and $C^{i}$.
\end{theorem}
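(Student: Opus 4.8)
The plan is to mimic the deterministic isospectrality proof: compute the cross-derivative compatibility condition $\mathsf{d}\psi_{xx} = \partial_x^2(\mathsf{d}\psi)$ and show it reduces to the SCH equation \eqref{SCH-eqn}, provided $\mathsf{d}\lambda=0$ and the stated form of the $\xi^i$. The key structural observation is that the stochastic evolution equation \eqref{SCHevoleqn} is obtained from the deterministic one \eqref{CHevoleqn} by the simple substitution $u\,dt \mapsto v$, with $v$ the stochastic vector field in \eqref{stochVFthm}. Because the spatial eigenvalue problem \eqref{SCHevprob} is \emph{identical} to the deterministic one \eqref{CHevprob} and carries no noise, the Stratonovich differential $\mathsf{d}$ obeys the ordinary Leibniz and chain rules, so the entire deterministic computation carries through verbatim with $u$ replaced by $v$.

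First I would differentiate the eigenvalue problem \eqref{SCHevprob} twice in $x$ and substitute to express $\psi_{xxx}$ and higher derivatives, exactly as in the deterministic case, treating $\lambda$ as constant ($\mathsf{d}\lambda=0$). Next I would apply $\mathsf{d}$ to \eqref{SCHevprob}; since only $m$ carries a time differential, this produces a term proportional to $\mathsf{d}m\cdot\psi/(2\lambda)$ plus terms involving $\mathsf{d}\psi$ and its derivatives. In parallel I would compute $\partial_x^2(\mathsf{d}\psi)$ from \eqref{SCHevoleqn}, using \eqref{SCHevprob} repeatedly to eliminate second derivatives of $\psi$. Equating the two and collecting the coefficient of $\psi$ (the coefficient of $\psi_x$ should vanish identically, as in the deterministic case) yields an identity of the form
\begin{equation*}
\frac{\mathsf{d}m}{2\lambda}
=
\frac{1}{2\lambda}\Big( -(\partial_x m + m\partial_x)v \Big)
+ (\text{terms from }v_{xxx}-v_x)\,,
\end{equation*}
so that the SCH equation \eqref{SCH-eqn} emerges precisely when the extra terms cancel.

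The crucial step — and the only place where the structure genuinely differs from the deterministic case — is pinning down where the hypothesis $\xi^i(x)=C^i+A^ie^x+B^ie^{-x}$ is forced. In the deterministic derivation the analogous terms cancel automatically because $u$ satisfies the Helmholtz relation $m=u-u_{xx}$; but the noise fields $\xi^i$ are prescribed and need \emph{not} be of Helmholtz-derived form, so the spurious terms proportional to $\xi^i_{xxx}-\xi^i_x$ survive. Demanding that these vanish gives the linear ODE $\xi^i_{xxx}=\xi^i_x$ for each $i$, whose general solution is exactly $\xi^i(x)=C^i+A^ie^x+B^ie^{-x}$. I expect this to be the main obstacle: one must carefully track which terms in the compatibility computation come from $u$ (and hence cancel via $m=u-u_{xx}$) versus which come from the noise coefficients $\xi^i$, and verify that the residual noise terms collapse to $(\xi^i_{xxx}-\xi^i_x)\psi_x$ (or a proportional expression) whose vanishing is equivalent to the stated third-order ODE.

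Finally, I would note that the Stratonovich (rather than Itô) formulation is essential throughout, since it is precisely the Stratonovich chain rule that allows the noise terms to be manipulated with ordinary calculus and lets the deterministic computation transfer unchanged; had we used Itô calculus, additional quadratic-variation correction terms would appear and the clean substitution $u\,dt\mapsto v$ would fail. Once the third-order condition on $\xi^i$ is established, reading off the coefficient of $\psi$ in the equated expressions and multiplying through by $2\lambda$ recovers \eqref{SCH-eqn} directly, completing the proof.
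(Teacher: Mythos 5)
Your proposal is correct and follows essentially the same route as the paper: equate the cross derivatives $\mathsf{d}\psi_{xx}=\partial_x^2\mathsf{d}\psi$ under Stratonovich calculus to obtain the SCH equation plus a residual term $\lambda\big(m_x-(v_x-v_{xxx})\big)$, whose deterministic part cancels via $m=u-u_{xx}$ and whose noise part forces $\xi^i_x-\xi^i_{xxx}=0$, i.e.\ $\xi^i(x)=C^i+A^ie^x+B^ie^{-x}$. Your identification of exactly where the hypothesis on the $\xi^i$ enters is the same key observation the paper's proof relies on.
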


\begin{proof}
By direct calculation, equating cross derivatives $\mathsf{d}\psi_{xx} = \partial_x^2\mathsf{d}\psi$ using equations \eqref{SCHevprob} and \eqref{SCHevoleqn} implies, when $\mathsf{d}\lambda=0$, that
\[
\mathsf{d}m+(\partial _{x}m+m\partial _{x}){v}  
+ \lambda \big(m_x - (v_x-v_{xxx})\big) = 0\,.
\]
Consequently, the compatibility condition for equations \eqref{SCHevprob} and \eqref{SCHevoleqn} implies the SCH equation in \eqref{SCH-eqn}, provided $\mathsf{d}\lambda=0$ and $\xi_x^{i}(x)-\xi_{xxx}^{i}(x) = 0$. The latter means that $\xi^{i}(x)$ is either constant, or exponential. 
\end{proof}

\begin{remark}
Theorem \ref{SCH-int-thm} means that the SCH equation \eqref{SCH-eqn} with
stochastic vector field $v$ \eqref{stochVF} with $\xi_x^{i}(x)-%
\xi_{xxx}^{i}(x) = 0$ has the same countably infinite set of conservation
laws as for the deterministic CH equation \eqref{CH-eqn1}. However, the SCH
Hamiltonian $\widetilde{h}(m)$ in \eqref{stoch-Ham} is not conserved by the
SCH equation \eqref{SCH-eqn}, because it depends explicitly on time.
Consequently, for those choices of $\xi^{i}(x)$, the SCH equation is
equivalent to consistency of the linear equations \eqref{SCHevprob} and %
\eqref{SCHevoleqn}. Therefore, SCH is solvable by the isospectral method for
each realisation of the stochastic process in \eqref{stochVFthm}. However,
SCH is almost certainly not completely integrable as a Hamiltonian system,
even for constant $\xi^{i}$. See \cite{Ar2015} for an example of an
integrable stochastic deformation of the CH equation.
\end{remark}

The issue now and for the remainder of the paper is to find out whether the
wave breaking property which is the mechanism for the creation of peakon
wave trains in the deterministic case also survives the introduction of
stochasticity.


\subsection{Wave breaking estimates for SCH}

In the following we will assume the conditions under which the stochastic
integrals appearing in equation \eqref{SCH-eqn} for $u$ as well as the
equation for $u_{x}$ are well defined and summable. In particular, we assume
that the vector fields $\xi _{i}$ are smooth and bounded and that 
\begin{equation*}
\sum_{i>0}((\Vert \xi ^{i}\Vert _{\infty })^{2}+(\Vert \xi _{x}^{i}\Vert
_{\infty })^{2}+(\Vert \xi ^{i}\Vert _{2,1})^{2})<\infty \,.
\end{equation*}%
Let $A^{i},\partial _{x}A^{i},$ $i\in \mathbb{Z}_{+}$ be the following set
of operators 
\begin{eqnarray}
A^{i}(u) &=&u_{x}\xi ^{i}-K\ast \left( u_{x}\xi _{xx}^{i}(x)+2u\xi
_{x}^{i}(x)\right) ,\   \label{ai} \\
\partial _{x}A^{i}(u) &=&u_{xx}\xi ^{i}+u_{x}\xi _{x}^{i}-\partial _{x}K\ast
\left( u_{x}\xi _{xx}^{i}(x)+2u\xi _{x}^{i}(x)\right) ,  \label{dxai}
\end{eqnarray}%
( $\partial _{x}A^{i}$ is obtained by formally differentiating $A^{i}$). In
the following we will assume that there is a local solution of equation %
\eqref{SCH-eqn} such that the operators $A^{i},\partial _{x}A^{i},$ $i\in 
\mathbb{Z}_{+}$ are well defined.

Let us deduce first the equation for the velocity slope, $u_x$. We have the
following Lemma:

\begin{lemma}[Evolution of the velocity slope]
Under the above conditions, we have 
\begin{equation}
\mathsf{d}u_{x}=-\frac{1}{2}\left( u_{x}^{2}+2uu_{xx}-u^{2}\right)
\,dt-K\ast \left( u^{2}+\frac{1}{2}u_{x}^{2}\right)
dt-\sum_{i}A_{x}^{i}(u)\circ dW_{t}^{i}.  \label{ux}
\end{equation}
\end{lemma}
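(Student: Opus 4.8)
The plan is to treat \eqref{ux} as the spatial derivative of the \emph{velocity} (advective) form of SCH, in exact parallel with the deterministic Steepening Lemma, whose proof differentiated the advective equation \eqref{CH-eqn3}. So I would first turn the momentum form \eqref{SCH-eqn} into an equation for $\mathsf{d}u$ and then differentiate once in $x$. Expanding the coadjoint operator via $(\partial_x m + m\partial_x)f = \partial_x(mf)+mf_x = m_x f + 2mf_x$, equation \eqref{SCH-eqn} reads $\mathsf{d}m = -(m_x v + 2mv_x)$ with $v$ the stochastic vector field \eqref{stochVF}. Since $K\ast$ is a fixed operator and $u = K\ast m$ by \eqref{Greens-relation}, it commutes with the semimartingale differential, giving $\mathsf{d}u = K\ast\mathsf{d}m = -K\ast(m_x v + 2mv_x)$. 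I would then split $v = u\,dt + \sum_i \xi^i \circ dW^i_t$ and process drift and noise separately; because everything stays in Stratonovich form, no It\^o-correction drift is generated and ordinary spatial differentiation applies termwise.

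For the drift ($v \mapsto u\,dt$), the identity $K\ast(m_x u + 2mu_x) = uu_x + \partial_x K\ast(u^2 + \frac12 u_x^2)$ --- which is exactly the equivalence of \eqref{CH-eqn1} and \eqref{CH-eqn3} --- returns the deterministic advective form. The substantive step is the noise coefficient $K\ast(m_x \xi^i + 2m\xi^i_x)$. Here I would use the Helmholtz rearrangement: writing $m = Lu$ with $L = 1 - \partial_x^2$ and regrouping,
\[
m_x\xi^i + 2m\xi^i_x = L(u_x\xi^i) + \big(u_x\xi^i_{xx} + 2u\xi^i_x\big),
\]
so that applying $K\ast$ (with $K\ast L = \mathrm{Id}$) collapses the convolution into the local term $u_x\xi^i$ plus a convolution remainder, i.e. the operator $A^i(u)$ recorded in \eqref{ai}, leaving $-A^i(u)$ as the Stratonovich coefficient of $\mathsf{d}u$.

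It then remains to differentiate this velocity form once in $x$. For the noise the derivative passes straight through and produces $-\sum_i \partial_x A^i(u)\circ dW^i_t = -\sum_i A^i_x(u)\circ dW^i_t$, with $\partial_x A^i$ exactly \eqref{dxai}. For the drift, differentiating the local term $uu_x$ and applying the Helmholtz identity $-\partial_x^2 K\ast = \mathrm{Id} - K\ast$ to $\partial_x K\ast(u^2 + \frac12 u_x^2)$ converts the nonlocal second derivative into local quadratic terms plus $-K\ast(u^2 + \frac12 u_x^2)$. This is the same manipulation already carried out in the deterministic Steepening Lemma and reproduces the $dt$-part of \eqref{ux}.

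The main obstacle is the convolution bookkeeping in the stochastic term: one must justify commuting $\partial_x$ and $K\ast$ through the semimartingale differential and through the (possibly infinite) sum over $i$. This is precisely what the standing hypotheses are for --- smoothness and boundedness of the $\xi^i$, the summability condition on $\|\xi^i\|_\infty$, $\|\xi^i_x\|_\infty$ and $\|\xi^i\|_{2,1}$, and the assumed local solvability making $A^i$ and $\partial_x A^i$ in \eqref{ai}--\eqref{dxai} well defined. Since the computation never leaves Stratonovich form, the algebra is a direct stochastic transcription of the deterministic slope equation, with no extra drift to control.
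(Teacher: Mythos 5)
Your proposal is correct and follows essentially the same route as the paper: pass from the momentum form to the advective velocity form by inverting the Helmholtz operator (your rearrangement $m_x\xi^i+2m\xi^i_x=L(u_x\xi^i)+u_x\xi^i_{xx}+2u\xi^i_x$ is the identity the paper uses, read in the opposite direction), then differentiate once in $x$ and apply $\partial_x^2K\ast=-\,\mathrm{Id}+K\ast$ to obtain \eqref{ux}. One small caveat: your (correct) algebra actually yields the Stratonovich coefficient $-(u_x\xi^i+K\ast(u_x\xi^i_{xx}+2u\xi^i_x))$, i.e.\ a plus sign on the convolution term where the printed definition \eqref{ai} has a minus --- a sign that is already inconsistent between the paper's displayed expansion and \eqref{equationforu}, and which is immaterial later on since the paper specializes to constant $\xi^i$, for which the convolution term vanishes.
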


\begin{proof}

Expanding out the SCH equation in terms of $u$ and $v$ gives
\begin{equation*}
\begin{split}
0& =\mathsf{d}m+(\partial _{x}m+m\partial _{x})v \\
& =(1-\partial _{x}^{2})\mathsf{d}u+2uv_{x}+u_{x}v-2v_{x}u_{xx}-vu_{xxx} \\
& =(1-\partial _{x}^{2})(\mathsf{d}u+vu_{x})+u_{x}v_{xx}+2uv_{x} \\
& =(1-\partial _{x}^{2})(\mathsf{d}u+vu_{x})+\partial _{x}\left( u^{2}+\frac{%
1}{2}u_{x}^{2}\right) \,dt 
+\sum \left( u_{x}\xi _{xx}^{i}(x)+2u\xi _{x}^{i}(x)\right) \circ
dW_{t}^{i}\,.\quad
\end{split}%
\end{equation*}%
Therefore, applying the smoothing operator $K\ast :=(1-\partial
_{x}^{2})^{-1}$, given by the convolution with the Green's function $K(x,y)$ in \eqref{Greens-relation} for
the Helmholtz operator $(1-\partial _{x}^{2})$, to both sides of the
previous equation yields
\begin{eqnarray}
\mathsf{d}u+uu_{x}\,dt &=&-u_{x}\left( \sum \xi ^{i}\circ dW_{t}^{i}\right)
-\partial _{x}K\ast \left( u^{2}+\frac{1}{2}u_{x}^{2}\right) \,dt+\sum K\ast \left( u_{x}\xi _{xx}^{i}+2u\xi _{x}^{i}\right) \circ
dW_{t}^{i} \nonumber \\
&=& -\partial _{x}K\ast \left( u^{2}+\frac{1}{2}u_{x}^{2}\right) dt-\sum (u_{x}\xi ^{i}-K\ast \left( u_{x}\xi _{xx}^{i}(x)+2u\xi
_{x}^{i}(x)\right))\circ dW_{t}^{i}\nonumber \\
&=&-\partial _{x}K\ast \left( u^{2}+\frac{1}{2}u_{x}^{2}\right) dt-\sum
A^{i}(u)\circ dW_{t}^{i},\label{equationforu}
\end{eqnarray}%
in which the derivative $\partial_x$ is understood to act on everything standing to its right. 
Consequently, we have 
\begin{equation}
\mathsf{d}u_{x}=-\left( u_{x}^{2}+uu_{xx}\right) \,dt-\partial _{xx}K\ast
\left( u^{2}+\frac{1}{2}u_{x}^{2}\right) dt-\sum A_{x}^{i}(u)\circ
dW_{t}^{i}. \label{uxi}
\end{equation}%
Then, since 
\[
\partial _{xx}K\ast \left( u^{2}+\frac{1}{2}u_{x}^{2}\right)  =-\left( u^{2}+\frac{1}{2}u_{x}^{2}\right) +K\ast \left( u^{2}+\frac{1}{2}u_{x}^{2}\right),    \label{partialxk} 
\]
we deduce \eqref{ux}.
\end{proof}

\begin{remark}
Observe that 
\begin{eqnarray}
\partial _{x}K\ast \left( u_{x}\xi _{xx}^{i}(x)+2u\xi _{x}^{i}(x)\right)
&=&\partial _{xx}K\ast (u\xi _{xx}^{i}(x))-\partial _{x}K\ast \left( u\xi
_{xxx}^{i}(x)+2u\xi _{x}^{i}(x)\right)  \notag \\
&=& -u\xi _{xx}^{i}(x)+K\ast (u\xi _{xx}^{i}(x))-\partial _{x}K\ast \left(
u\xi _{xxx}^{i}(x)+2u\xi _{x}^{i}(x)\right).  \label{partialxk2}
\end{eqnarray}%
Hence, the last term in the expression of \eqref{dxai} can be controlled by
the supremum norm of $u$.
\end{remark}

Just as in the deterministic case, we define next the process $t\mapsto \nu
_{t}$ as the inflection point of $u$ to the right of its maximum so that 
\begin{equation*}
u_{xx}\left( \nu _{t},t\right) =0,\ \ \mathrm{and}\ \ s_{t}=u_{x}\left( \nu
_{t},t\right) <0.
\end{equation*}%
In what follows, we will assume, without proof, that the process $t\mapsto
\nu _{t}$ is a semi-martingale. The argument to show that validity of this
property is based on the implicit function theorem. Indeed one can show that 
$\nu $ satisfies the equation 
\begin{equation*}
d\nu _{t}=-\,{\frac{1}{u_{xxx}(\nu _{t},t)}}(\mathsf{d}u_{xx})(\nu _{t},t)\,,
\end{equation*}%
provided the equation is well defined. That is, assume we have an inflection
point, not an inflection interval; which means we have assumed $u_{xxx}(\nu
_{t},t)\neq 0$. Using the semimartingale property of $\nu $ and the It\^{o}%
-Ventzell formula (see, e.g., \cite{op}), we deduce that 
\begin{equation*}
\mathsf{d}\left( u_{x}(\nu _{t},t)\right) =\left( \mathsf{d}u_{x}\right)
\left( \nu _{t},t\right) +u_{xx}\left( \nu _{t},t\right) \circ \mathsf{d}\nu
_{t}=\left( \mathsf{d}u_{x}\right) \left( \nu _{t},t\right) .
\end{equation*}%
Hence, by \eqref{ux} and \eqref{partialxk2}, we find that 
\begin{eqnarray}
ds_{t} &=&-\left( \frac{1}{2}s_{t}^{2}-u^{2}\left( \nu _{t},t\right) \right)
dt-K\ast \left( u^{2}+\frac{1}{2}u_{x}^{2}\right) \left( \nu _{t}\right) 
\notag \\
&&-\sum_{i}\left( s_{t}\xi _{\nu _{t}}^{i}+B^{i}(u)|_{\nu _{t}}\right) \circ
dW_{t}^{i},  \label{st}
\end{eqnarray}%
where the operators $B^{i}$ are given by 
\begin{equation*}
B^{i}(u)=-u\xi _{xx}^{i}(x)+K\ast (u\xi _{xx}^{i}(x))-\partial _{x}K\ast
\left( u\xi _{xxx}^{i}(x)+2u\xi _{x}^{i}(x)\right) ,\ \ i\in \mathbb{Z}_{+}
\end{equation*}%
We will henceforth consider the particular case when the vector fields $\xi
^{i}$ are spatially homogeneous, so that $B^{i}(u)=0$. (This is also the
isospectral case, which we discussed in the previous section.) In this case,
just as in the deterministic case, we have 
\begin{equation}
\Vert u(\cdot ,t)\Vert _{1,2}=\Vert u(\cdot ,0)\Vert _{1,2},\ \ \ \mathrm{%
for\ \ all}\ \ x\in \mathbb{R}.  \label{constantnormsto}
\end{equation}%
and, again, \eqref{constantnormsto} implies that 
\begin{equation}
M:=\sup_{t\in \lbrack 0,\infty )}\Vert u(\cdot ,t)\Vert _{\infty }<\infty .
\label{MM}
\end{equation}%
This bound arises because the stochastic term vanishes when computing $%
\mathrm{d}\Vert u(\cdot ,t)\Vert _{1,2}^{2}$ . More precisely, the
stochastic term is given by the expression 
\begin{equation*}
\sum (2uu_{x}+u_{x}u_{xx})\,\xi ^{i}\circ W_{t}^{i},
\end{equation*}%
whose spatial integral over the real line vanishes for constant $\xi ^{i}$,
for the class of solutions $u(\cdot ,t)$ which vanish at infinity and whose
gradient also vanishes at infinity. Note that the constant $M$ in \eqref{MM}
is independent of the realization of the Brownian motions $W^{i}$, $i\in {%
\mathbb{Z}_{+}}$. By a standard Sobolev embedding theorem, %
\eqref{constantnorm} also implies the useful relation that 
\begin{equation}
M:=\sup_{t\in \lbrack 0,\infty )}\Vert u(\cdot ,t)\Vert _{\infty }<\infty .
\end{equation}

\begin{proposition}
\label{p9}As in the deterministic case, suppose the initial profile of
velocity $u(x,0)$ has an inflection point at $x=\overline{x}$ to the right
of its maximum, and it decays to zero in each direction; so that $%
\|u(\cdot,0)\|_{1,2}<\infty$. Consider the expectation of the slope at the
inflection point, $\bar{s}_{t}=E\left[ s_{t}\right] $. If $u_x(\bar x,0)$ is
sufficiently small, then there exists $\tau<\infty$ such that $%
\lim_{t\mapsto \tau} \bar{s}_{t}=-\infty$.
\end{proposition}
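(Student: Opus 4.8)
The plan is to derive a closed scalar differential inequality for the expected slope $\bar s_t = E[s_t]$ and then to compare it with a Riccati ordinary differential equation, mirroring the deterministic Steepening Lemma but carrying one extra linear term produced by the noise. First I would specialise the slope equation \eqref{st} to the spatially homogeneous (isospectral) case treated just above, where $B^i(u)=0$ and each $\xi^i(\nu_t)$ reduces to a constant $\xi^i$. This leaves the scalar Stratonovich equation
\begin{equation*}
ds_t = -\Big(\tfrac12 s_t^2 - u^2(\nu_t,t)\Big)\,dt - K\ast\Big(u^2+\tfrac12 u_x^2\Big)(\nu_t)\,dt - \sum_i \xi^i\, s_t \circ dW_t^i \,.
\end{equation*}
Since the noise is multiplicative in $s_t$, I would rewrite it in It\^o form; the Stratonovich correction contributes the drift $\tfrac12\big(\sum_i (\xi^i)^2\big)s_t =: \Lambda s_t$, with $\Lambda \ge 0$.

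Next I would take expectations. Under the standing integrability hypotheses the It\^o integral is a martingale, so it has zero mean and $\tfrac{d}{dt}\bar s_t$ equals the expectation of the It\^o drift. Three estimates then close the inequality in the favourable direction: the convolution $K\ast(u^2+\tfrac12 u_x^2)(\nu_t)$ is nonnegative and enters with a minus sign, so it is discarded; $u^2(\nu_t,t)\le \|u(\cdot,t)\|_\infty^2 \le M$ by the realisation-independent bound \eqref{MM}; and Jensen's inequality gives $E[s_t^2]\ge (E[s_t])^2 = \bar s_t^2$, whence $-\tfrac12 E[s_t^2]\le -\tfrac12 \bar s_t^2$. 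Combining these yields
\begin{equation*}
\frac{d\bar s_t}{dt} \le -\tfrac12 \bar s_t^2 + \Lambda \bar s_t + M \,.
\end{equation*}

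Finally I would invoke the scalar comparison principle against the solution $\tilde s_t$ of the Riccati equation $\tilde s' = -\tfrac12 \tilde s^2 + \Lambda \tilde s + M$ with $\tilde s_0 = s_0$, exactly as in the integrating-factor argument preceding \eqref{coth}, to obtain $\bar s_t \le \tilde s_t$. Completing the square through $\tilde s = \Lambda + w$ reduces the Riccati equation to $w' = -\tfrac12 w^2 + (\tfrac12\Lambda^2 + M)$, whose explicit $\coth$ solution blows up to $-\infty$ in finite time precisely when $w_0 < -\sqrt{\Lambda^2 + 2M}$, that is, when $s_0 = u_x(\bar x,0) < \Lambda - \sqrt{\Lambda^2+2M}$; this negative threshold is the quantitative meaning of ``sufficiently small'' (i.e.\ sufficiently negative), and it is in fact milder than the deterministic threshold $-\sqrt{2M}$ because $\Lambda\ge 0$. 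Since $\bar s_t \le \tilde s_t$, the expectation reaches $-\infty$ no later than $\tilde s_t$, which supplies the finite breaking time $\tau$.

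The ODE analysis is routine; the main obstacle is the rigorous justification of the expectation step. One must ensure that $s_t$ stays a well-defined semimartingale with finite second moments up to the breaking time, so that the It\^o stochastic integral is genuinely a zero-mean martingale and the interchange $\tfrac{d}{dt}E[s_t]=E[\text{drift}]$ is legitimate; this rests on the assumed semimartingale property of $t\mapsto \nu_t$ and the It\^o--Ventzell derivation of \eqref{st}. Controlling these moments uniformly, and handling the localisation as $\bar s_t\to-\infty$, is the delicate part, while the Jensen step is what makes the quadratic term cooperate by pointing in the blow-up-favouring direction.
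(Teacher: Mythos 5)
Your proposal follows essentially the same route as the paper: convert \eqref{st} from Stratonovich to It\^o form (picking up the drift $\tfrac12\sum_i(\xi^i)^2 s_t$), take expectations so the martingale term vanishes, use Jensen's inequality (the paper equivalently writes $E[s_t^2]$ as variance plus $E[s_t]^2$ and discards the nonnegative variance) together with the bound $M$ and the sign of the convolution term, and then close with a Riccati-type comparison for $d\bar s_t/dt\le -\tfrac12\bar s_t^2+\Lambda\bar s_t+M$. The only difference is cosmetic: you complete the square to extract the explicit threshold $s_0<\Lambda-\sqrt{\Lambda^2+2M}$, whereas the paper absorbs the linear term via Young's inequality into $-\tfrac{1-\varepsilon}{2}\bar s_t^2+\bigl(M+\|\xi\|^2/2\varepsilon\bigr)$ and appeals to the deterministic blow-up argument.
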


\begin{proof}

By changing from Stratonovitch to It\^{o} integration, we obtain from %
\eqref{st} that 
\begin{equation}  \label{stochasticcase}
ds_{t}=-\left( \frac{1}{2}s_{t}^{2}-u^{2}\left( \nu _{t},t\right) \right)
dt-K\ast \left( u^{2}+\frac{1}{2}u_{x}^{2}\right) \left( \nu _{t}\right)
-\sum_{i}s_{t}\xi ^{i}dW_{t}^{i}+\frac{1}{2}\sum_{i}s_{t}\left( \xi
^{i}\right) ^{2}
\end{equation}%
and, by taking expectation, we deduce that 
\begin{equation*}
dE\left[ s_{t}\right] \leq -\frac{1}{2}\left( E\left[ s_{t}^{2}\right] -E%
\left[ s_{t}\right] ^{2}\right) -\frac{1}{2}\left( E\left[ s_{t}\right]
^{2}\right) +\frac{\left\vert \left\vert \xi \right\vert \right\vert }{2}E%
\left[ s_{t}\right] +M
\end{equation*}%
Consequently, 
\begin{equation*}
d\bar{s}_{t}\leq -\frac{1}{2}\left( \bar{s}_{t}\right) ^{2}+\frac{\left\vert
\left\vert \xi \right\vert \right\vert }{2}\bar{s}_{t}+M\leq -\frac{%
1-\varepsilon }{2}\left( \bar{s}_{t}\right) ^{2}+\left( M+\frac{\left\vert
\left\vert \xi \right\vert \right\vert ^{2}}{2\varepsilon }\right)
\end{equation*}%
from which we deduce that the magnitude $|\bar{s}_{t}|$ blows up in finite
time, just as in the deterministic case. 
\end{proof}

Note that Proposition \ref{p9} does not guarantee pathwise blow up of the
process for the magnitude of the negative slope at the inflection point $%
|s_t|$, only the blow up of its mean, $|\bar{s}_{t}|$. The following theorem
shows that, indeed the pathwise negative slope $s_t$ blows up in finite time
with positive probability, albeit not with probability 1.

\begin{theorem}[Wave breaking for the Stochastic Camassa-Holm equation]
Under the same assumptions as those introduced in Proposition \ref{p9}, with
positive probability, the negative slope at the inflection point $%
s_t=u_{x}(\nu_t,t)$ will become vertical in finite time.
\end{theorem}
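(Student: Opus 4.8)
The plan is to exploit the fact that, in the spatially homogeneous (isospectral) case the $\xi^i$ are constants, so the noise in the It\^o slope equation \eqref{stochasticcase} is \emph{linear} in $s_t$, i.e. purely multiplicative. This is the structural feature that makes pathwise control possible. Writing $\|\xi\|^2:=\sum_i(\xi^i)^2$ and collecting the bounded terms into $f_t:=u^2(\nu_t,t)-K*(u^2+\tfrac12 u_x^2)(\nu_t)$, which satisfies $|f_t|\le C_0$ for a deterministic constant $C_0$ by \eqref{MM} and conservation of $\|u\|_{1,2}$, equation \eqref{stochasticcase} reads
\[
ds_t=\Big(-\tfrac12 s_t^2+\tfrac12\|\xi\|^2 s_t+f_t\Big)\,dt-\sum_i s_t\,\xi^i\,dW_t^i .
\]

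First I would remove the multiplicative noise with an integrating factor. Set $\Phi_t:=\exp\big(\sum_i\xi^i W_t^i+\tfrac12\|\xi\|^2 t\big)$, so that $\Phi_0=1$ and $d\Phi_t=\Phi_t\sum_i\xi^i\,dW_t^i+\Phi_t\|\xi\|^2\,dt$, and define $z_t:=\Phi_t s_t$. Applying the It\^o product rule, the $dW_t^i$ contributions and the cross-variation cancel the $\|\xi\|^2$ drift exactly, leaving the \emph{pathwise random ODE}
\[
\frac{dz_t}{dt}=-\frac{1}{2\Phi_t}\,z_t^2+\frac12\|\xi\|^2 z_t+\Phi_t f_t .
\]
Because $\Phi_t>0$ for every realization, $z_t\to-\infty$ in finite time if and only if $s_t=z_t/\Phi_t\to-\infty$ in finite time, so it suffices to produce a positive-probability event on which $z_t$ blows up downward.

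Next I would argue pathwise on a controlled event. Fix $T>0$ and constants $0<a<1<A$, and consider $\Omega_{a,A,T}:=\{\,a\le\Phi_t\le A\ \text{for all }t\in[0,T]\,\}$. Since $\sum_i\xi^i W_t^i$ is a continuous Gaussian martingale with quadratic variation $\|\xi\|^2 t$, the factor $\Phi$ is (in law) a geometric Brownian motion started at $1$, and a standard small-ball estimate gives $\mathbb{P}(\Omega_{a,A,T})>0$. On this event $z_0=s_0$ and, as long as $z_t\le 0$, the random ODE obeys the differential inequality $\dot z_t\le-\tfrac{1}{2A}z_t^2+\tfrac12\|\xi\|^2 z_t+AC_0$. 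Comparison with the constant-coefficient Riccati equation $\dot w=-\tfrac{1}{2A}w^2+\tfrac12\|\xi\|^2 w+AC_0$, $w_0=s_0$, whose quadratic term dominates for large negative argument, shows that $w_t\to-\infty$ at a finite time $\tau_\ast$ provided $s_0$ lies below the (negative) smaller root of the right-hand side; moreover $\tau_\ast=O(A/|s_0|)\to 0$ as $s_0\to-\infty$, so $\tau_\ast<T$ once $s_0$ is sufficiently negative. By the comparison principle $z_t\le w_t$, hence $z_t\to-\infty$ with $\tau_\ast<T$, and therefore $s_t\to-\infty$ on $\Omega_{a,A,T}$. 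This establishes wave breaking with probability at least $\mathbb{P}(\Omega_{a,A,T})>0$.

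The main obstacle is the coordination of constants: one must first fix $T$ and the tube $[a,A]$ (thereby fixing both the positive probability and the Riccati coefficients), and only then impose the threshold on $s_0$, choosing it negative enough both to lie below the smaller root and to force the deterministic blow-up time $\tau_\ast$ to occur strictly inside $[0,T]$, i.e. while $\Phi$ is still confined to the tube. This is exactly the ``sufficiently small $u_x(\bar x,0)$'' hypothesis of Proposition~\ref{p9}, possibly strengthened by a deterministic factor depending only on $a,A,T,C_0,\|\xi\|$. The remaining ingredient used without proof --- that $t\mapsto\nu_t$ is a semimartingale and that \eqref{stochasticcase} is valid up to $\tau_\ast$ --- is granted by the standing assumptions preceding the Proposition; the genuinely probabilistic content is confined to the elementary but essential observation that the multiplicative noise can be gauged away, after which the blow-up reduces to a deterministic comparison on a Brownian tube event of positive probability.
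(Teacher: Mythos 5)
Your proof is correct (modulo the same standing assumptions the paper itself leaves unproven: that $\nu_t$ is a semimartingale and that \eqref{stochasticcase} holds up to the blow-up time), but it follows a genuinely different route from the paper. The paper keeps the multiplicative noise and uses the Dambis--Dubins--Schwarz theorem to write $\int_0^t s_p\,dW_p = B_{\int_0^t s_p^2\,dp}$, so that $s_t$ is dominated by $s_0$ plus a bounded drift plus a Brownian motion with negative drift evaluated at the random clock $\int_0^t s_p^2\,dp$; the positive-probability event is then that the running maximum of this drifted Brownian motion never exceeds a fixed level (an exponential-tail estimate), after which a self-consistent bootstrap gives $\int_0^t s_p^2\,dp\to\infty$ and finite-time verticality. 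You instead gauge away the noise with the Doss--Sussmann integrating factor $\Phi_t=\exp\big(\sum_i\xi^iW_t^i+\tfrac12\|\xi\|^2t\big)$ (your It\^o computation for $z_t=\Phi_t s_t$ is right: the martingale parts cancel and the cross-variation kills the $\|\xi\|^2$ drift of $\Phi$, leaving only the It\^o correction $\tfrac12\|\xi\|^2z_t$), reduce to a pathwise random Riccati ODE, and localize to a Brownian tube event $\{a\le\Phi_t\le A,\ t\le T\}$ of positive probability on which a constant-coefficient Riccati comparison forces blow-up before $T$. What your approach buys: it is more elementary (no time change, no self-referential clock), it makes the order of quantifiers transparent (fix $T,a,A$ first, then the threshold on $s_0$), and it sidesteps a literal awkwardness in the paper's argument, namely the claim that $X_{\cdot}$ stays below $s_0/2<0$ for \emph{all} times even though $X_0=0$. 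What the paper's approach buys: its controlling event concerns the entire infinite time horizon of the drifted Brownian motion, so no window $[0,T]$ need be chosen in advance, and the estimate on the maximum of drifted Brownian motion gives an explicit exponential expression for the probability of breaking in terms of $|s_0|$. One small remark: in your comparison step the restriction ``as long as $z_t\le0$'' is unnecessary, since $-\tfrac{1}{2\Phi_t}z_t^2\le-\tfrac{1}{2A}z_t^2$ holds for all $z_t$ once $\Phi_t\le A$; likewise only the upper bound $\Phi_t\le A$ is actually used, both in the differential inequality and in converting $z_t\to-\infty$ into $s_t\to-\infty$.
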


\begin{proof}

We define a new Brownian motion $W$, as follows 
\begin{equation*}
W_{t}=\frac{-\sum_{i}\xi ^{i}W_{t}^{i}}{\left\vert \left\vert \xi
\right\vert \right\vert },~~~~t>0.
\end{equation*}%
Then the equation for $s_{t}$ becomes%
\begin{equation*}
ds_{t}=-\left( \frac{1}{2}s_{t}^{2}-\frac{\left\vert \left\vert \xi
\right\vert \right\vert ^{2}}{2}s_{t}-u^{2}\left( \nu _{t},t\right) \right)
dt-K\ast \left( u^{2}+\frac{1}{2}u_{x}^{2}\right) \left( \nu _{t}\right)
dt+\left\vert \left\vert \xi \right\vert \right\vert s_{t}dW_{t}\,.
\end{equation*}%
We introduce a Brownian motion $B$ such that the stochastic integral $%
\int_{0}^{t}s_{p}{d}W_{p}$ can be represented as 
\begin{equation*}
\int_{0}^{t}s_{p}{d}W_{p}=B_{\int_{0}^{t}s_{p}^{2}dp}\,.
\end{equation*}%
Then, as above, 
\begin{eqnarray*}
s_{t} &\leq &s_{0}+\frac{1}{2}\int_{0}^{t}\left( \left( M+\frac{\left\vert
\left\vert \xi \right\vert \right\vert ^{2}}{2\varepsilon }\right) +\frac{%
\varepsilon s_{p}^{2}}{2}\right) dp+\left( -\frac{1-2\varepsilon }{2}%
\int_{0}^{t}s_{p}^{2}dp+\left\vert \left\vert \xi \right\vert \right\vert
B_{\int_{0}^{t}s_{p}^{2}dp}\right)  \\
&\leq &s_{0}+\frac{1}{2}\int_{0}^{t}\left( M+\frac{\left\vert \left\vert \xi
\right\vert \right\vert ^{2}}{2\varepsilon }\right)
dp+X_{\int_{0}^{t}s_{p}^{2}dp}
\end{eqnarray*}%
where\ $X\ $is a Brownian motion with negative drift\footnote{%
If $X$ be a Brownian motion with negative drift, $X(t)=\sigma B(t)+\mu t$, $%
\mu <0,$ $\lim_{t\mapsto \infty }X\left( t\right) =-\infty .$ Let $%
M=\max_{s\geq 0}X\left( t\right) .$ Then $P\left( M\geq a\right) =\exp
\left( {-a\left( \frac{2\left\vert \mu \right\vert }{\sigma ^{2}}\right) }%
\right) $ and we conclude that M has an exponential distribution with mean $%
\frac{2\left\vert \mu \right\vert }{\sigma ^{2}}$. Put it differently, no
matter where we start the Brownian motion with drift there is a positive
probability that it will reach any level, before it drifts off to $-\infty $%
. Vice versa, it never hits level $a$ with positive probability, see e.g. 
\cite{KS}.} 
\begin{equation*}
X\left( t\right) =-\frac{1-3\varepsilon }{2}t+\left\vert \left\vert \xi
\right\vert \right\vert B_{t}.
\end{equation*}%
\textbf{With positive probability (though not 1!),} the process 
\begin{equation*}
t\mapsto X_{\int_{0}^{t}s_{p}^{2}dp}
\end{equation*}%
remains smaller than, say, $s_{0}/2<0$ for all $t>0$. If the magnitude of
the negative slope at the inflection point $|s_{0}|$ is sufficiently large,
then \textbf{with positive probability } the term 
\begin{equation*}
\frac{1}{2}\int_{0}^{t}\left( \left( M+\frac{\left\vert \left\vert \xi
\right\vert \right\vert ^{2}}{2\varepsilon }\right) -\frac{\varepsilon
s_{p}^{2}}{2}\right) ds
\end{equation*}%
will always stay negative. It follows that, with positive probability, we
have 
\begin{equation*}
\lim_{t\mapsto \infty }\int_{0}^{t}s_{p}^{2}dp=\infty ,\quad \hbox{and}\quad
\lim_{t\mapsto \infty }s_{t}\leq \lim_{t\mapsto
0}(s_{0}+X_{\int_{0}^{t}s_{p}^{2}dp})=-\infty \,.
\end{equation*}%
Moreover, for sufficiently large $t$, 
\begin{equation*}
X_{\int_{0}^{t}s_{p}^{2}dp}=-\frac{1-2\varepsilon }{2}%
\int_{0}^{t}s_{p}^{2}dp+\left\vert \left\vert \xi \right\vert \right\vert
B_{\int_{0}^{t}s_{p}^{2}dp}\leq -\frac{1-3\varepsilon }{2}%
\int_{0}^{t}s_{p}^{2}dp\,.
\end{equation*}%
Hence, 
\begin{equation*}
s_{t}\leq s_{0}-\frac{1-3\varepsilon }{2}\int_{0}^{t}s_{p}^{2}dp.
\end{equation*}%
which, in turn, implies, as in the deterministic case, that the negative
slope $s_{t}$ at the inflection point must become vertical in finite time. 
\end{proof}

\begin{remark}
In a similar manner we can show that $s_{t}\geq \tilde{s}_{t},$ $t\geq 0,$
where%
\begin{equation*}
\tilde{s}_{t}=-\left( \frac{1}{2}\tilde{s}_{t}^{2}-\frac{\left\vert
\left\vert \xi \right\vert \right\vert ^{2}}{2}\tilde{s}_{t}+M\right)
dt+\left\vert \left\vert \xi \right\vert \right\vert \tilde{s}%
_{t}dW_{t}\,,~~~\tilde{s}_{0}=s_{0}.
\end{equation*}%
To show this one proceeds, as in the proof of the steepening lemma, by first 
justifying the inequality  
\begin{equation*}
\frac{\mathrm{d}}{\mathrm{d}t}((s_{t}-\tilde{s}_{t})\mathrm{e}^{{\frac{1}{2}}%
\int_{0}^{t}(s_{p}+\tilde{s}_{p})dp+\frac{\left\vert \left\vert \xi
\right\vert \right\vert ^{2}t}{2}+\left\vert \left\vert \xi \right\vert
\right\vert W_{t}})\geq 0,\ \ s_{0}-\tilde{s}_{0}=0,
\end{equation*}%
so that, $s_{t}\geq \tilde{s}_{t}$ for all $t>0$ (as long as both are well
defined). In turn, $\tilde{s}_{t}$, and therefore $s_{t}$, may achieve
positive values with positive probability{, which could, in principle, lead to a violation of the conditions under which a peakon may emerge in finite time due to the presence of an inflection point with slope $s_{t}$.} Future work is planned by the authors to further investigate the emergence of peakons as well as the local well-posedness of the
stochastic CH\ equation.
\end{remark}

\end{document}